\documentclass[acmsmall,screen]{acmart}
\setcopyright{none}
\copyrightyear{2025}
\acmYear{2025}
\acmDOI{}
\usepackage{fancyhdr}

\usepackage{amsthm}

\usepackage{bbold}

\usepackage{eucal} 
\usepackage[colorinlistoftodos,
]{todonotes}
\setuptodonotes{color=black!10,bordercolor=black!30,textcolor=black}

\usepackage{xcolor}

\usepackage[frozencache]{minted}
\setminted{
    fontfamily=tt, 
}
\usemintedstyle{bw}
\definecolor{LightGray}{gray}{0.95}

\usepackage{tikz}
\usetikzlibrary{automata,positioning,arrows.meta,backgrounds,fit,calc}
\tikzset{arr/.style={-Latex}}
\usepackage{pgfmath}
\usepackage{tikz-cd}
\usepackage[a]{esvect} 
\usepackage{bm} 
\usepackage{tabularray}
\UseTblrLibrary{diagbox}

\newcommand{\cS}{\mathcal{S}}
\newcommand{\cT}{\mathcal{T}}
\newcommand{\nc}{\mathbb{n}} 
\DeclareMathOperator{\Ob}{Ob}
\DeclareMathOperator{\dom}{dom}
\DeclareMathOperator{\cod}{cod}

\newcommand{\ENDO}[1]{#1^{\bm{\circlearrowright}}}

\begin{document}
\title[Exploring Semigroupoids]{Computational Exploration of Finite Semigroupoids}
\author{Attila Egri-Nagy}
\affiliation{\institution{Akita International University}\city{Akita-city}\country{Japan}}\email{egri-nagy@aiu.ac.jp}
\author{Chrystopher L.~Nehaniv}
\affiliation{\institution{University of Waterloo}\city{Waterloo}\country{Canada}}
\email{chrystopher.nehaniv@uwaterloo.ca}
\settopmatter{printacmref=false}
\settopmatter{printfolios=true}
\renewcommand\footnotetextcopyrightpermission[1]{}
\pagestyle{fancy}
\fancyfoot{}
\fancyfoot[R]{miniKanren'25}
\fancypagestyle{firstfancy}{
 \fancyhead{}
 \fancyhead[R]{miniKanren'25}
 \fancyfoot{}
}
\makeatletter
\let\@authorsaddresses\@empty
\makeatother
\begin{abstract}
Recent algorithmic advances in algebraic automata theory drew attention to semigroupoids (semicategories).
These are mathematical descriptions of typed computational processes, but they have not been studied systematically in the context of automata.
Here, we use relational programming to explore finite semigroupoids to improve our mathematical intuition about these models of computation.
We implement declarative solutions for enumerating abstract semigroupoids (partial composition tables), finding homomorphisms, and constructing (minimal) transformation representations.
We show that associativity and consistent typing are different properties, distinguish between strict and more permissive homomorphisms, and systematically enumerate arrow-type semigroupoids (reified type structures).
\end{abstract}
\keywords{semigroupoid, semigroup, category, relational programming, morphisms, representations, \texttt{Clojure}, \texttt{core.logic}, \texttt{miniKanren}}

\maketitle
\thispagestyle{firstfancy}

\section{Introduction}
\emph{Computational exploration} is the use of computers for epistemological purposes.
We do not automate a well-understood and specified process.
Instead, we get to know something by writing and running the code.
This paper has the added twist of being a bit self-referential: we will study models of computation by this computational learning method.
The reader can benefit both from learning about algebraic automata theory, and from seeing a novel application of relational programming.

\subsection{The Transition from Semigroups to Semigroupoids in Algebraic Automata Theory}
\emph{Algebraic automata theory} studies finite state automata as algebraic structures, namely as \emph{semigroups} \cite{Howie95}.
These are sets equipped with an associative binary operation.
The connection between automata and algebra is easy to see: input symbols correspond to transformations (total functions) of the state set, thus their composition is associative.

The Krohn-Rhodes Prime Decomposition Theorem \cite{primedecomp65} is a seminal results of the field, stating that all automata can be built from simpler basic building blocks in a hierarchical way.
There are different ways of computing such decompositions.
The Covering Lemma \cite{FromRelToEmulationNehaniv1996} is a particularly simple method that produces a two level decomposition.
It has a recent software implementation \cite{egrinagy2024relation}, but there is one caveat: except the top level, the components are not necessarily semigroups.
Their composition is partially defined, thus they are \emph{semigroupoids}, i.e., categories with optional identity morphisms.
Objects of categories can be seen as constraints for composition, in other words, typing rules.
Therefore, introducing \emph{types} to semigroups is needed for hierarchical decomposition algorithms \cite{sgpoiddec}.

Beyond a general decomposition algorithm, semigroupoids have other benefits.
They allow more efficient representations of computations in terms of fewer elements.
When the operation is not defined, we are not forced to define it somehow.
This has security implications too.
Typed composition can restrict the possible computations the system can perform.
Typing constrains composability and thus, it limits the connection between the subsystems.
In contrast, if the objects are sufficiently connected by functions, their dynamics can be fully transferred (Example \ref{example:communicating_vessels}).

Despite some specialized interests in typed automata \cite{barr1995category}, semigroupoids were not studied systematically.
Therefore, this paper will do \emph{a computational exploration of semigroupoids}.
We discuss algorithms and their outputs to enrich our intuition about typed computation.

\subsection{Paper Structure}

In Section \ref{sect:mathcompprelim} we briefly terminology for search problems and introduce the software tools.
Section \ref{sect:abstractsgpoids} is about the abstract representation of semigroup(oid)s, i.e., enumerating composition tables.
Section \ref{section:morphisms} studies the homomorphisms of semigroupoids, with a particular interest in the homomorphic images that reify the type structure of a base semigroupoid.
Section \ref{sect:transrep} is about concrete transformation representations.

The paper separates the mathematical theory, the \emph{specification} level, and the language-specific \emph{implementation} details.
This distinction is shown in the subsection titles.
For reimplementation purposes, a purely mathematical reading is possible.

\section{Mathematical and Computational Preliminaries}
\label{sect:mathcompprelim}

We prefer to do the exploration in a declarative manner, by describing the properties of the objects.
This way the mathematical description is closer to the source code and verification is easier.
Mathematically, we describe the objects of study as constraint satisfaction problems.
On the implementation level, we move from \emph{functional} to \emph{relational} programming.

\subsection{Theory: Constraint Satisfaction Problems}

Here we fix a general definition to give a template and notation for the particular problems later.

\begin{definition}[Constraint Satisfaction Problem CSP]
A \emph{constraint satisfaction problem} is a triple $(V,D,C)$, where $V$ is a set of \emph{variables} $V=\{V_1,\ldots,V_n\}$, each taking values from a corresponding \emph{domain}  in $D=\{D_1,\ldots,D_n\}$, $v_i\in D_i$,  and $C=\{C_1,\ldots, C_m\}$ is a set of \emph{constraints} (relations) defined on the variables.
\end{definition}
A \emph{solution} is a particular assignment of values to the variables $(v_1,\ldots,v_n)$,  satisfying all constraints.
The search space size can easily be calculated by the product of the domain sizes, $\prod_{i=1}^{n}|D_i|$.
This is often a prohibitively large value; thus, the success of the search depends on the efficiency of the constraints, i.e., their capability of ruling out parts of the search tree.

\subsection{Implementation: \texttt{Clojure}, \texttt{core.logic}, and \texttt{kigen}}
The \texttt{kigen} \cite{kigen} computer algebra software package contains reimplementations of several algorithms from other packages \texttt{Semigroups} \cite{Semigroups}, \texttt{SmallSemi} \cite{smallsemi}, \texttt{SubSemi} \cite{subsemi}, and \texttt{SgpDec} \cite{SgpDec}, all written for the \texttt{GAP} computer algebra system \cite{GAP4}.
The main purpose of \texttt{kigen} is to provide a different implementation for verification purposes.
It also contains the implementation presented in this paper.

\texttt{GAP} has its own functional and object-oriented programming language built on top of a kernel written in \texttt{C}, while \texttt{kigen} is written in \texttt{Clojure} \cite{Clojure2020}, a language hosted on the \texttt{Java Virtual Machine}.
Agreeing results on different architectures and by different implementations increase our confidence that the computational output is not corrupted by some software artifact.
The complete enumeration \cite{T4enum} of all 4-state finite automata (degree 4 transformation semigroups) was successfully reproduced in \texttt{kigen}.

Yet another advantage of the language choice is the accessibility of relational programming.
The library \texttt{core.logic} implements the \texttt{miniKanren} system \cite{MiniKanren}.
We will refer to the library in the code as \texttt{l}.
The core unification algorithm is famously described on a single page in the book \cite{ReasonedSchemer}.
Since the logic engine is directly accessible from the language, a convenient workflow can be adapted for relational programming:
\begin{enumerate}
\item Work out the functional version \texttt{f} of a function.
\item Turn that function \texttt{f} into a goal \texttt{fo} by adding the output of \texttt{f} as the last argument.
\end{enumerate}
This ensures that by the time of writing the goal, which is more difficult as it is less familiar, the functional understanding of the problem is already complete.

The choice of relational programming as the programming paradigm for the project is not for maximizing execution speed, but for improving our understanding of the mathematical object.
Therefore, no attempt is made to deal with the internal parameters of the logic engine.
It is treated as a black box.
We put emphasis on the declarative definitions of semigroupoids, but still enjoy the benefits of the descriptions being executable.

\section{Enumerating Abstract Semigroupoids}
\label{sect:abstractsgpoids}

We start with semigroups and introduce composition tables as their abstract representations.
By constructing all associative composition tables, we can enumerate abstract semigroups.
Thus, we partially reproduce the enumeration in \cite{smallsemi}.

Semigroups are \emph{typeless}, or more precisely \emph{single-typed}.
Composition is free of constraints: we can always compose two semigroup elements.
When we include types, composition is subject to typing rules and consequently it becomes partial.
When enumerating abstract semigroupoids, we need to pay attention both to associativity and to the typing rules.
As categories are special semigroupoids, their enumeration \cite{2014CountingFiniteCats} is contained in the enumeration of semigroupoids.

\subsection{Theory: Semigroups and their Abstract Representations -- Composition Tables}

Semigroups can be defined directly, or relative to more familiar algebraic structures, namely groups, that are special cases of semigroups.
Here is the direct definition.

\begin{definition}[Semigroup]
A \emph{semigroup} is a set $S$ with an associative binary operation $S\times S\rightarrow S$.
\end{definition}
We call this binary operation \emph{composition}, but \emph{multiplication} is also a common name.
As in automata theory, we write composition from left to right, $st$ is $s$ then $t$, mimicking the sequential processing of input symbols, and the concatenation of words.
Unfortunately, this is the opposite of the current convention in category theory, though it was not always the case (e.g., \cite{lawveretheories}).

A \emph{monoid} is a semigroup with a (unique) identity element: $\exists e\in S$ such that $es=se=s, \forall s\in S$.
In category theory, monoids are single-object categories.
Requiring inverses to each element in a monoid yields a \emph{group}, which is a fundamental object in modern mathematics \cite{GroupsAndSymmetry1988}.
Semigroups can be viewed as generalizations of groups with partial or no symmetries.

In a semigroup, any two elements can be composed; thus, a data structure containing the rules of composition comprises a  full description.

\begin{definition}[Semigroup Composition Table]
The \emph{composition table} of a semigroupoid $S$ of $|S|=n$ elements is an $n\times n$ square array.
The rows and columns are indexed by the set of elements.
The $(a,b)$ entry is the composite $ab$.
\end{definition}

\subsection{Implementation: Finding Associative Composition Tables}
Composition tables are implemented as a vector of vectors.
Consequently, the composition operation requires two vector lookups, \texttt{nth} calls.
There is no directly implemented goal for \texttt{nth} in \texttt{core.logic}, therefore we need to write a recursive version for a general sequential collection.
\begin{minted}[bgcolor=LightGray]{clj}
(defn ntho
  "Succeeds if the sequential collection `coll` has `content` at the position
   `index`. Classic recursive implementation."
  ([coll index content] (ntho coll index content 0)) ;initializing counter
  ([coll index content i]
   (l/matche [coll]
     ([[head . tail]]
      (l/conde [(l/== head content) (l/== index i)] ;match at right place
               [(ntho tail index content (inc i))]))))) ;otherwise recur
\end{minted}
Note that the relational version is lot more forgiving than the functional \texttt{nth}.
Indices other than non-negative integers cause no issues.
Over-indexing is also not a problem, since the recursion naturally stops due to the matching failing for the tail at the end of the list.
Thus, we can implement a goal for composition by literally translating the functional code for composition into relational form.
\begin{minted}[bgcolor=LightGray]{clj}
(defn composo
  "Succeeds if `a` composed with `b` is `ab` in the composition table `S`."
  [S a b ab]
  (l/fresh
   [row]
   (ntho S a row) ;the row of a
   (ntho row b ab))) ;the bth entry in that row should be ab
\end{minted}
We can run queries to ask what is the result of composition, what pairs lead to a particular result, and we can also build suitable composition tables.
\begin{minted}[bgcolor=LightGray]{clj}
(def FF [[0 1 2] ;the flip-flop semigroup, famous in automata theory
         [1 1 2]
         [2 1 2]])
(l/run* [q] (composo FF 0 1 q)) ; What is 01 in the flip-flop?
(1)
(l/run* [a b] (composo FF a b 1)) ; finding all pairs composing to 1
([0 1] [1 0] [1 1] [2 1])
(l/run* [q r s t] (composo [[q r] [s t]] 0 1 1)) ; all tables with 01=1
([_0 1 _1 _2])
\end{minted}
Finally, we define the goal for associativity for a given triple of arrows.
We need to check that $(ab)c=a(bc)$.
Another straight functional to relational translation.
\begin{minted}[bgcolor=LightGray]{clj}
(defn associativo
  "The goal for associativity for a given triple."
  [S a b c]
  (l/fresh
   [ab bc abc]
   (composo S ab c abc) ; (ab)c=abc
   (composo S a bc abc) ; a(bc)=abc
   (composo S a b ab)
   (composo S b c bc)))
\end{minted}
One \texttt{composo} goal can be eliminated for triples with identical elements.
We can also apply the technique used for solving sudoku \cite{2014joyclojure}, and start with a partially filled composition table.

Without any optimization, this general search can find all finite semigroups up to size 4, which is far from the state-of-the art size 10 \cite{smallsemi}.
The direct enumeration of semigroups is not too useful.
Asymptotically, most finite semigroups are 3-nilpotent, meaning that the semigroup has a zero element, and multiplying any three elements gives that zero.

\subsection{Theory: Introducing Types}

Adding types to semigroups can be approached from several directions.
One way is the `oidification', a way of categorification, where we add more objects in addition to the single object of an algebraic structure.
This suggests the name \emph{semigroupoid}.
It can be viewed as a slightly defective, non-unital category, where the identity arrows are not guaranteed to exist.
Thus, an alternative name is \emph{semicategory}.
An elementary understanding might start from a directed graph with the special property that every directed path has a single arrow ``equal to it'', the result of composing of the arrows along the path.
Thus, a semigroupoid is a transitively closed directed graph.
From a programming perspective, these arrows may be viewed as functions with the specified input and output types.
We will use the name semigroupoid.
Here is the precise definition:
\begin{definition}
  \label{def:sgpoid}
  A \emph{semigroupoid} consists of \emph{objects}, \emph{arrows} between the objects, and an \emph{associative} \emph{composition} operation.
For a semigroupoid $\cS$ we use the following notation.
\begin{description}
  \item[objects] The set of objects is denoted by $\Ob(\cS)$. For objects we use letters $\sigma,\tau,\rho,\kappa$. We also call these (formal) \emph{types}.
  \item[arrows] The total set of arrows is seldom mentioned. Instead, for an object pair $(\sigma,\tau)$ we talk about the set of arrows between those objects, denoted by $\cS(\sigma,\tau)$. It can be empty, or contain one or more arrows. An arrow $a\in\cS(\sigma,\tau)$ has \emph{domain} $\dom(a)=\sigma$, and \emph{codomain} $\cod(a)=\tau$ (also called \emph{source} and \emph{target}). Alternatively, we can write $a:(\sigma,\tau)$, which reads as `$a$ has type  $(\sigma,\tau)$'.
  \item[composition] Two arrows $a$ and $b$ are \emph{composable} if $\cod(a)=\dom(b)$.
    The composition is denoted by concatenation $(a,b) \mapsto ab$,
    which type-chekcs as  $a:(\sigma,\tau)$, $b:(\tau,\rho)$, and $ab:(\sigma,\rho)$.
    Here is a diagram for composition.
\begin{center}\begin{tikzcd}
  \sigma \arrow[r,"a",arr] \arrow[rr,"ab"',bend right=15,arr] & \tau \arrow[r,"b",arr] & \rho
\end{tikzcd}
\end{center}
 \item[associativity] Composition should satisfy the \emph{associativity} condition: $a(bc)=(ab)c$. Consequently, the composite $abc$ is a well-defined arrow of type $(\dom(a),\cod(c))$.
\end{description}
\end{definition}
The semigroupoid structure describes when arrows can be composed and the result of the composition.
Like for semigroups, we can use composition tables, but the entries for non-composable pairs of arrows are undefined.
When the arrows are grouped by type, the table has rectangular blocks of undefined cells.

\begin{example}[Two-type semigroupoid \cite{sgpoiddec}] Fig.~\ref{2obj6arr} shows a semigroupoid with two types $\sigma,\tau$. Morphisms $a,b$ are of type $\ENDO{\sigma}$, $c,d,e$ are of type $\vv{\sigma\tau}$, and $f$ is of type $\ENDO{\tau}$.
  \label{ex:2obj-sgpoid}
\end{example}

\begin{figure}[ht]
\begin{center}
\begin{tblr}{Q[c,m]Q[c,m]Q[c,m]} 
\begin{tikzpicture}[shorten >=1pt, node distance=2cm, on grid, auto,inner sep=2pt]
    \node[draw, circle] (X)   {$\sigma$};
    \node[draw, circle] (Y) [right of=X] {$\tau$};
    \path[->,every loop/.append style=-{Latex}]
    (X) edge [arr, loop above] node {$a$} (X)
    (X) edge [arr, loop below] node {$b$} (X)
    (X) edge [arr] node {$d$} (Y)
    (X) edge [arr,bend left=42, above] node {$c$} (Y)
    (X) edge [arr,bend right=42, below] node {$e$} (Y)
    (Y) edge [arr,loop right] node {$f$} (Y);
\end{tikzpicture}
&
\begin{tblr}{
  hline{2-8}={2-8}{0.4pt,solid},
  vline{2-8}={2-8}{0.4pt,solid}}
    & $a$ & $b$ & $c$ & $d$ & $e$ & $f$ \\
$a$ & $a$ & $b$ & $c$ & $d$ & $e$ & \\
$b$ & $b$ & $a$ & $c$ & $e$ & $d$ & \\
$c$ &  &  &  &  &  &  $c$\\
$d$ &  &  &  &  &  &  $c$\\
$e$ &  &  &  &  &  &  $c$\\
$f$ &  &  &  &  &  &  $f$ \\
\end{tblr}
&
\begin{tblr}{
hline{2-5}={2-4}{0.4pt,solid},
vline{2-5}={2-4}{0.4pt,solid}}
           & $\ENDO{\sigma}$ & $\vv{\sigma\tau}$ & $\ENDO{\tau}$  \\
$\ENDO{\sigma}$ & $\ENDO{\sigma}$ & $\vv{\sigma\tau}$ &  \\
$\vv{\sigma\tau}$ &  &  & $\vv{\sigma\tau}$ \\
$\ENDO{\tau}$ &  & & $\ENDO{\tau}$ \\
\end{tblr}
\end{tblr}
\end{center}
\caption{A semigroupoid with two objects and six arrows. Diagram of objects and arrows on the left, the corresponding composition table in the middle, and the simplified composition table only with the arrow types on the right.}
\label{2obj6arr}
\end{figure}
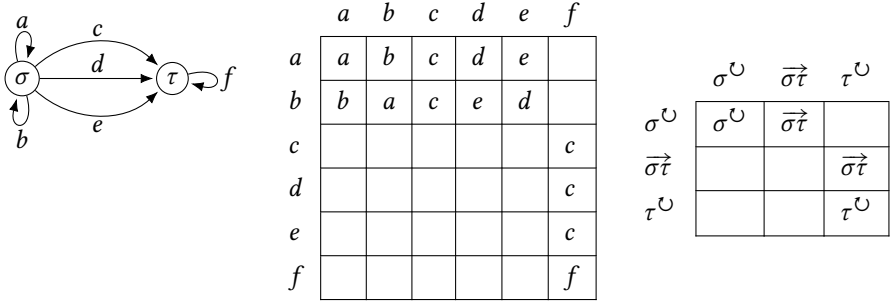

The above definition of a semigroupoid (Definition \ref{def:sgpoid}), as all the standard definitions, is a bit vague about the associativity condition.
When both pairs in the triple are composable, associativity can be checked without any problem.
What happens when one or both pairs fail to compose?
This is a crucial question when enumerating abstract semigroupoids.
We need to decide whether a given table is associative or not.
When working with concrete representations, e.g., sets and transformations, this is not an issue since we know that function composition is associative.

\begin{example}
Figure~\ref{fig:3x3assocfail} shows two composition tables with three arrows.
Both are sparsely populated, so they do not have enough composition to have a triple of elements not leading to undefined result.
Still, we need to tell them apart, as one of them is actually a semigroupoid.
That's why it is crucial to define associativity properly for non-composable pairs.
\end{example}
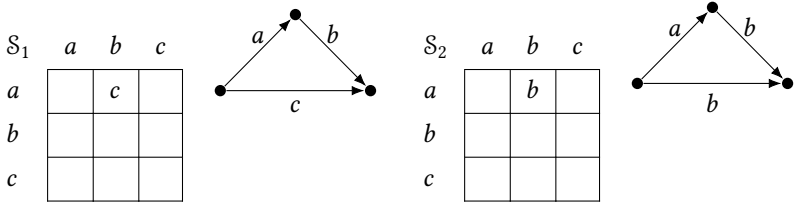
\begin{figure}
\begin{center}
\begin{tblr}{Q[c,m]Q[c,m]Q[c,m]Q[c,m]} 
\begin{tblr}{
  hline{2-5}={2-5}{0.4pt,solid},
  vline{2-5}={2-5}{0.4pt,solid}}
 $\cS_1$   & $a$ & $b$ & $c$  \\
$a$ &  & $c$ &   \\
$b$ &  &     &   \\
$c$ &  &  &  \\
\end{tblr}
&
\begin{tikzpicture}
  \tikzset{obj/.style={fill,circle,inner sep=1.5pt},
          arr/.style={-Latex}}
  \node at (0,0) [obj] (t0) {};
  \node at (1,1) [obj] (t1) {};
  \node at (2,0) [obj] (t2) {};

  \path[->,every loop/.append style=arr]
  (t0) edge [arr] node [above] {$a$} (t1)
  (t1) edge [arr] node [above] {$b$} (t2)
  (t0) edge [arr] node [below] {$c$} (t2);
\end{tikzpicture}
&
\begin{tblr}{
  hline{2-5}={2-5}{0.4pt,solid},
  vline{2-5}={2-5}{0.4pt,solid}}
 $\cS_2$   & $a$ & $b$ & $c$  \\
$a$ &  & $b$ &   \\
$b$ &  &     &   \\
$c$ &  &  &  \\
\end{tblr}
&
\begin{tikzpicture}
  \tikzset{obj/.style={fill,circle,inner sep=1.5pt},
          arr/.style={-Latex}}
  \node at (0,0) [obj] (t0) {};
  \node at (1,1) [obj] (t1) {};
  \node at (2,0) [obj] (t2) {};

  \path[->,every loop/.append style=arr]
  (t0) edge [arr] node [above] {$a$} (t1)
  (t1) edge [arr] node [above] {$b$} (t2)
  (t0) edge [arr] node [below] {$b$} (t2);
\end{tikzpicture}

\end{tblr}
\end{center}
\caption{Two $3\times 3$ composition tables. In $\cS_1$, $ab=c$ is the only composition and the diagram shows that it is a semigroupoid. However, $\cS_2$ fails associativity for $aab$, since $(aa)b=\nc b=\nc$, where $\nc$ stands for `not composable', while $a(ab)=ab=b$, and $b\neq \nc$. The diagram also shows that it is not a semigroupoid. We could make $\cS_2$ into a semigroupoid by making $a$ idempotent, $aa=a$, but in $\cS_2$ $a$ is not composable with itself.}
\label{fig:3x3assocfail}
\end{figure}

We represent the result of the composition of arrows that are not composable by a new value $\nc$, meaning non-defined, or \emph{not composable}.
To some extent, we treat $\nc$ as if it were an arrow: composing any arrows with $\nc$ yields $\nc$.
Most importantly, when checking associativity, we treat it as any other value.
This way we can catch situations when $(ab)c$ yields $\nc$, but $a(bc)$ is an arrow, violating associativity.
Of course, ultimately, $\nc$ is not an arrow.
It can appear in the cells of the composition table, but there is no row or column for $\nc$.

Now we can be more precise: the associativity condition is satisfied if
\begin{enumerate}
\item $a(bc)=(ab)c$, when $ab$ and $bc$ are both defined; or,
\item $ab=\nc \implies a(bc)=\nc$; or,
\item $bc=\nc \implies (ab)c=\nc$; or,
\item $bc=ab=\nc$.
\end{enumerate}

\subsection{Implementation: Dealing with Non-composable Pairs}

The main mechanism for enumerating abstract semigroupoids is the same as for enumerating abstract semigroups.
We check associativity for all triples.
However, we have to be careful when  composition is not defined.
We represent the lack of composability by the keyword \texttt{:n}, meaning `not composable' (same as $\nc$).
The goal for associativity in a semigroupoid is just simply  writing down all the cases that can happen.
\footnote{The road to this solution was not as simple though.
In the functional code, \texttt{nil} is a convenient choice for representing non-composable pairs in the table, as it works well with \texttt{when} and \texttt{when-not}. The \texttt{nil} value is also conspicuous when printing the composition table.
However, it is difficult to handle \texttt{nil} values in the logic engine at the same time as working with values from a finite domain.
Also, it interferes with logic variables without values yet.
The former error is easy to catch since it throws an exception, but the latter was caught only by inconsistencies in the mathematical results.}

\begin{minted}[bgcolor=LightGray]{clj}
(defn sgpoid-associativo
  "The goal for associativity for a given triple in a semigroupoid."
  [S [a b c]]
  (l/fresh
   [ab bc]
   (composo S a b ab)
   (composo S b c bc)
   (l/conda
    [(l/conde
      [(l/== ab :n) (composo S a bc :n)]
      [(l/== bc :n) (composo S ab c :n)]
      [(l/== bc :n) (l/== ab :n)])]
    [(l/fresh
      [abc]
      (composo S ab c abc)
      (composo S a bc abc))])))
\end{minted}
Unlike the semigroup case, enumerating composition tables that satisfy the generalized associativity condition does not necessarily give semigroupoids.
We need to check whether the typing rules make sense or not.

\subsection{Theory: Inferencing Type Structure}

Associativity alone does not guarantee that a composition table is a semigroupoid.
It also needs a consistent \emph{type structure}: the number of types (objects) and the domain-codomain pairs for arrows.
This problem is not the same as \emph{type inference} in programming, where we want to find the type of an expression.
When two semigroupoids have identical type structures, we say they are \emph{type compatible}.

Having a type structure and being associative are different properties.
Simple examples in Figure~\ref{fig:typednonassoc} demonstrate that a table can have consistent types and fail the associativity condition.
We can also have an associative table with inconsistent typing.

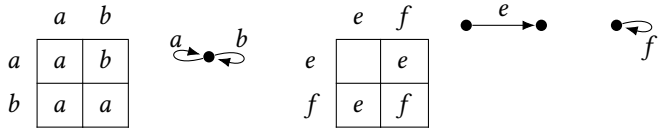
\begin{figure}
\begin{center}
\begin{tblr}{Q[c,m]Q[c,m]Q[c,m]Q[c,m]} 
\begin{tblr}{
  hline{2-4}={2-4}{0.4pt,solid},
  vline{2-4}={2-4}{0.4pt,solid}}
    & $a$ & $b$   \\
$a$ & $a$ & $b$    \\
$b$ & $a$ &  $a$      \\
\end{tblr}
&
\begin{tikzpicture}
  \tikzset{obj/.style={fill,circle,inner sep=1.5pt},
          arr/.style={-Latex}}
  \node at (0,0) [obj] (t0) {};
  \path[->,every loop/.append style=arr]
  (t0) edge [arr,loop left] node [above] {$a$} (t0)
  (t0) edge [arr,loop right] node [above] {$b$} (t1);
\end{tikzpicture}
&
\begin{tblr}{
  hline{2-4}={2-4}{0.4pt,solid},
  vline{2-4}={2-4}{0.4pt,solid}}
    & $e$ & $f$   \\
$e$ &  & $e$    \\
$f$ & $e$ &  $f$      \\
\end{tblr}
&
\begin{tikzpicture}
  \tikzset{obj/.style={fill,circle,inner sep=1.5pt},
          arr/.style={-Latex}}
  \node at (0,0) [obj] (t0) {};
  \node at (1,0) [obj] (t1) {};
  \node at (2,0) [obj] (t2) {};
  \path[->,every loop/.append style=arr]
  (t0) edge [arr] node [above] {$e$} (t1)
  (t2) edge [arr,loop right] node [below] {$f$} (t2);
\end{tikzpicture}
\end{tblr}
\end{center}
\caption{For the left table, type structure inference is trivial for the composition table, since everything is composable. Thus we have a semigroup candidate. However, associativity fails: $(bb)b=ab=b$, but $b(bb)=ba=a$. Type structure does not imply associativity. On the right, the table is associative, but we cannot produce consistent typing rules. Arrow $e$ is not self-composable, thus $\dom(e)\neq\cod(e)$. For arrow $f$ we have $\dom(f)=\cod(f)$. However, the table shows that we should be able to compute both $fe$ and $ef$, i.e., $\cod(f)=\dom(e)$ and $\cod(e)=\dom(f)$, implying $\dom(e)=\cod(e)$, contradiction. Therefore, associativity does not imply consistent type structure.}
\label{fig:typednonassoc}
\end{figure}

Constraints for type inferencing are easy to describe by checking equality for domains and codomains.
For each arrow $a$ we have two logic variables, one for the domain $\dom(a)$, the other one for its codomain $\cod(a)$.
Given that we try to have $m$ types, each variable can take $m$ values.
If the table has $n$ arrows, the search space size is $m^{2n}$.
The constraints are
\begin{enumerate}
  \item when $a$ is not composable with $b$, then $\cod(a)\neq\dom(b)$;
  \item when $ab$ is a valid composition (including the case $aa$), we need $\cod(a)=\dom(b)$;
  \item when $ab=c$, then $\dom(a)=\dom(c)$ and $\cod(b)=\cod(c)$.
\end{enumerate}
The first two cases already give $n^2$ constraints, and depending on the number of composable pairs, we may get another $n^2$.
When we have $2n^2$ constraints for a complete composition table, then we have a semigroup and consequently only a single type is admissible.
The other extreme is an empty table (see Figure~\ref{fig:emptycomptab}), where we have $n^2$ statements about domain-codomain pairs not being equal, still allowing diversity in type structure.
We can route non-composable arrows arbitrarily, as long as they do not become composable.

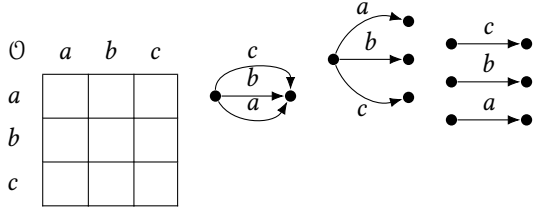
\begin{figure}
\begin{center}
\begin{tblr}{Q[c,m]Q[c,m]Q[c,m]Q[c,m]} 
\begin{tblr}{
  hline{2-5}={2-5}{0.4pt,solid},
  vline{2-5}={2-5}{0.4pt,solid}}
 $\mathcal{O}$   & $a$ & $b$ & $c$  \\
$a$ &  &  &   \\
$b$ &  &     &   \\
$c$ &  &  &  \\
\end{tblr}
&
\begin{tikzpicture}
  \tikzset{obj/.style={fill,circle,inner sep=1.5pt},
          arr/.style={-Latex}}
  \node at (0,0) [obj] (t0) {};
  \node at (1,0) [obj] (t1) {};
  \path[->,every loop/.append style=arr]
  (t0) edge [arr,bend right=60] node [above] {$a$} (t1)
  (t0) edge [arr] node [above] {$b$} (t1)
  (t0) edge [arr,bend left=90] node [above] {$c$} (t1);
\end{tikzpicture}
&
\begin{tikzpicture}
  \tikzset{obj/.style={fill,circle,inner sep=1.5pt},
          arr/.style={-Latex}}
  \node at (0,0) [obj] (t0) {};
  \node at (1,0.5) [obj] (t1) {};
  \node at (1,0) [obj] (t2) {};
  \node at (1,-0.5) [obj] (t3) {};
  \path[->,every loop/.append style=arr]
  (t0) edge [arr,bend left=42] node [above] {$a$} (t1)
  (t0) edge [arr] node [above] {$b$} (t2)
  (t0) edge [arr,bend right=42] node [below] {$c$} (t3);
\end{tikzpicture}
&
\begin{tikzpicture}
  \tikzset{obj/.style={fill,circle,inner sep=1.5pt},
          arr/.style={-Latex}}
  \node at (0,0) [obj] (t0) {};
  \node at (1,0) [obj] (t1) {};
  \node at (0,.5) [obj] (t2) {};
  \node at (1,.5) [obj] (t3) {};
  \node at (0,-.5) [obj] (t4) {};
  \node at (1,-.5) [obj] (t5) {};
  \path[->,every loop/.append style=arr]
  (t0) edge [arr] node [above] {$b$} (t1)
  (t2) edge [arr] node [above] {$c$} (t3)
  (t4) edge [arr] node [above] {$a$} (t5);
\end{tikzpicture}
\end{tblr}
\end{center}
\caption{Semigroupoid $\mathcal{O}$ with an empty composition table. The emptiness of the table does not imply the absence of arrows, only the lack of composition. Indeed, $\mathcal{O}$ can have several different type structures (there are more), as the only constraint is that no arrows can be composed. The rightmost diagram is maximal in the sense that cannot be  more objects involved in forming domains and codomains.}
\label{fig:emptycomptab}
\end{figure}

\subsection{Implementation: Search for Minimal Number of Objects}
The constraints are implemented by unification, \texttt{l/==} and by \texttt{l/distincto}.
More interesting than the straightforward coding of the constraints is the problem of finding a minimal number of types needed for a given semigroupoid.
This is done by a simple meta-search algorithm.
Starting from $m=1$, we attempt to construct a type structure.
If it fails, we go to $m+1$.
We can stop the search at $m=2n$ for $n$ arrows.
In that case, each arrow gets its own distinct domain-codomain pairs.
Beyond this we would have objects not assigned to any arrows (see Figure~\ref{fig:emptycomptab}).
If we do not get a consistent type structure for any of these values, we conclude that the composition table is not a semigroupoid.

\section{Enumerating Morphisms of Semigroupoids}
\label{section:morphisms}

Homomorphisms are structure preserving functions and the structure of a semigroupoid is defined by arrow composition.
For finding homomorphisms, we consider semigroupoids as single-sorted algebras (arrows only), instead of two-sorted (objects and arrows).
The objects do not appear in the search algorithm, there are no logic variables for them.
However, mappings on arrows induce mappings on objects, thus the homomorphisms can be seen as semigroupoid functors.

Given semigroupoids $\cS$ and $\cT$, what are the homomorphisms $\varphi:\cS\rightarrow \cT$?
They are functions that take a composable pair of arrows in the source to a composable pair in the target.
What happens to non-composable pairs?
We have two choices.
We can be strict, requiring that non-composable pairs go to non-composable pairs.
This way we get homomorphisms that preserve type structures.
This could be the right choice when modeling computer security protocols.
Alternatively, we can allow non-composable pairs to become composable by checking compatibility only for composable pairs.
In algebra, a map collapsing everything into an identity arrow is a valid homomorphism.

\subsection{Theory: Homomorphisms}

A homomorphism $\varphi: \cS\rightarrow \cT$ sends each arrow $a\in\cS$ to an arrow $\varphi(a)\in\cT$, its \emph{image}.
This map must satisfy the condition of \emph{compatible operations}:
$$\varphi(ab)=\varphi(a)\varphi(b),\ \forall a,b\in\cS.$$
Compatible composition means that it does not matter whether we first compose in $\cS$ and then send the result to $\cT$, or first send the arrows first to their corresponding images in $\cT$, and then do the composition there.
For expressing the compatibility conditions as constraints, we turn the composition function in $\cS$ into a relation.

\begin{definition}[Composition function and relation]
The \emph{composition function} for semigroupoid $\cS$ is $c_\cS:\cS\times\cS\rightarrow\cS$, taking an ordered pair of arrows to their composite arrow: $(a,b)\mapsto ab$.
In general, turning around this function yields a relation from $\cS$ to $\cS\times\cS$, given by
$c_\cS^{-1}(d)=\{(a,b)\mid c_\cS(a,b)=d\}$.
In other words, $a$ and $b$ are $d$-related if $ab=d.$
Non-composable pairs are $\nc$-related.
\end{definition}

For deciding whether the compatibility conditions hold or not, the basic idea is that \emph{we take the composition relation of $\cS$ and substitute the arrows of $\cS$ with variables taking values from the arrows of $\cT$}.
Formally, $c_\cS^{-1}[a\mapsto\varphi(a)]$, turning the relation $d=ab$ in $\cS$ into the relation $\varphi(d)=\varphi(a)\varphi(b)$ in $\cT$.
The substitution turns the image $\varphi(a)$ into the variable $v_a$.

\begin{definition}[Semigroupoid Homomorphism CSP] Given $\cS=\{a_1,\ldots,a_n\}$ and $\cT$, the search problem for finding homomorphisms $\varphi:\cS\rightarrow \cT$ is the triple
  $$H_{\cS\rightarrow\cT}=\left(\{v_{0},\ldots,v_{n}\}, \cT, c_\cS^{-1}[a_i\mapsto v_i]\right),$$
  where $v_i$ is interpreted as $\varphi(a_i)$.
\end{definition}

When constructing $\varphi$, we need a logic variable for each element of the semigroupoid to determine its image.
Each variable can have any arrow from $\cT$ as its value.
Thus, search space size is $|\cT|^{|\cS|}$, which is exactly the number of functions from $\cS$ to $\cT$.

For semigroups, a bijective homomorphism is an \emph{isomorphism}.
In that case, we add the constraint that $v_{a_i}\neq v_{a_j}$ whenever $i\neq j$.
However, for semigroupoids, bijective homomorphisms need not be isomorphisms.
It depends on whether we include the $\nc$-relation in the constraints, or not.
A homomorphism is \emph{strict} if it satisfies compatibility conditions for $\nc$-related elements as well.
The situation is similar to graph theory.
Graph isomorphisms are defined with an if-and-only-if condition \cite{2020_GI_review}.
For two graphs to be isomorphic, one graph should have an edge whenever the other graph has one between the corresponding nodes.
There should not be an edge when the other one does not have it.
The map preserves adjacency, and non-adjacency.
Graph homomorphisms, on the other hand are defined by implication: edges in the source imply edges in the target, but we do not say anything about  non-edges.
In category theory, we distinguish between \emph{preservation} (if the source has the structure, then target also), and \emph{reflection} (if the image in the target has the structure, then the source has it too).

Similarly, in semigroupoids are composition is not defined for  all pairs of arrows.
When we transfer structure, the question is, do we care only about composable pairs, or all arrow pairs?
Shall the mapping only preserve, or reflect composition?
It may depend on the application.
If we care about the type structure, then we need to use \emph{strict homomorphisms} that reflect composition.
\begin{proposition}
If $\varphi:\cS\to \cT$ is a strict homomorphism, then $\cS$ and $\cT$ are type-compatible.
A bijective strict homomorphism is an isomorphism.
\end{proposition}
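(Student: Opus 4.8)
The plan is to distill from strictness a single statement about composability, read it as a statement about domains and codomains, and then settle the two claims separately.

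First I would make precise what strictness gives. The compatibility law $\varphi(ab)=\varphi(a)\varphi(b)$ already sends a composable pair to a composable pair, since $\varphi(ab)$ is an arrow and hence so is $\varphi(a)\varphi(b)$. Strictness adds the reverse on the $\nc$-relation, $ab=\nc\Rightarrow\varphi(a)\varphi(b)=\nc$, so together I obtain the biconditional that $\varphi$ preserves \emph{and} reflects composability:
$$\cod(a)=\dom(b)\iff\cod(\varphi(a))=\dom(\varphi(b)),$$
using that composability is exactly $\cod=\dom$ (Definition \ref{def:sgpoid}). This is the engine for everything that follows.

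For the first claim I would use this biconditional to transport the domain--codomain incidence of $\cS$ onto $\cT$. Concretely, I would define the induced object map by $\dom(a)\mapsto\dom(\varphi(a))$ and $\cod(a)\mapsto\cod(\varphi(a))$ and check that it is consistent: whenever two arrows share an object, their images must share the corresponding object. The load-bearing case is $\dom(a)=\cod(b)$, i.e.\ $(b,a)$ composable, where the biconditional forces $\dom(\varphi(a))=\cod(\varphi(b))$. Thus every composability-induced identification of types in $\cS$ is matched in $\cT$ and conversely, which is precisely what it means for the two type structures to be compatible.

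The step I expect to fight is well-definedness at objects that occur only as domains or only as codomains (pure sources and sinks): for such an object the biconditional supplies no composable witness, so two arrows with a common source could a priori be sent to arrows with distinct sources. This is exactly the inference freedom already visible in the empty-table example (Figure \ref{fig:emptycomptab}), where unconstrained objects may be routed in several ways. I would argue that this freedom is invisible to type-compatibility --- the admissible type structures of a table are pinned down only up to such choices --- so that matching the constrained incidence suffices, and I would take care to phrase type-compatibility at the level of admissible type structures rather than a single labelling.

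Finally, for the bijective claim I would show the inverse is again a strict homomorphism, which upgrades $\varphi$ to an isomorphism. Writing $x=\varphi(a)$, $y=\varphi(b)$, the biconditional reflects composability from $\cT$ back to $\cS$; when $(a,b)$ composes, $\varphi(ab)=xy$ gives $\varphi^{-1}(xy)=ab=\varphi^{-1}(x)\varphi^{-1}(y)$, and when it does not, strictness keeps $(x,y)$ non-composable, so the $\nc$-condition transfers symmetrically. Hence $\varphi$ is a bijection identifying the two composition tables cell for cell, so $\cS$ and $\cT$ have the same admissible type structures and the induced object map is a bijection --- giving both type-compatibility and the isomorphism.
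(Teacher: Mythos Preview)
The paper states this proposition without proof, so there is nothing to compare against; I can only assess your argument on its own merits.

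Your core move---combining ordinary compatibility with the strictness clause to obtain the biconditional $\cod(a)=\dom(b)\iff\cod(\varphi(a))=\dom(\varphi(b))$---is exactly right, and it drives both parts. Your proof of the second claim is correct and standard: the biconditional makes $\varphi^{-1}$ again a strict homomorphism, so a bijective strict homomorphism is invertible as a semigroupoid map and hence an isomorphism. The closing remark that the two composition tables then agree cell for cell, and so admit the same type structures, is a nice way to recover type-compatibility as a corollary in the bijective case.

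For the first claim you have put your finger on the real issue. Under the paper's own definition of type-compatible (``identical type structures'': same number of objects, same domain--codomain incidence), the statement as written is too strong for a non-surjective $\varphi$: a single idempotent mapped into one component of two disjoint idempotents is already a counterexample. Your reinterpretation---that a strict homomorphism preserves \emph{and reflects} the composability-forced identifications among types, and hence induces a well-defined map on objects---is the right way to salvage it, and your caution about pure sources and sinks is precisely the edge case that needs handling. The resolution you sketch (that domain-sharing at a pure source is never forced by the composition table, so it is not part of the data being preserved) is correct in spirit; I would only sharpen it by stating explicitly that what you are actually proving is type-compatibility of $\cS$ with the image $\varphi(\cS)$, or equivalently agreement of their minimal admissible type structures, since the paper's phrasing invites the stronger (and false) reading.
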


\begin{figure}
  \begin{center}
\begin{tblr}{Q[c,m]Q[c,m]Q[c,m]Q[c,m]} 
\begin{tblr}{
  hline{2-4}={2-4}{0.4pt,solid},
  vline{2-4}={2-4}{0.4pt,solid}}
  $\cS$  & $a$ & $b$   \\
$a$ & $a$ &     \\
$b$ &  &        \\
\end{tblr}
&
\begin{tikzpicture}
  \tikzset{obj/.style={fill,circle,inner sep=1.5pt},
          arr/.style={-Latex}}
  \node at (0,0) [obj] (t0) {};
  \node at (1,0) [obj] (t1) {};
  \node at (2,0) [obj] (t2) {};
  \path[->,every loop/.append style=arr]
  (t0) edge [arr,loop left] node [above] {$a$} (t0)
  (t2) edge [arr] node [above] {$b$} (t1);
\end{tikzpicture}
&
\begin{tblr}{
  hline{2-4}={2-4}{0.4pt,solid},
  vline{2-4}={2-4}{0.4pt,solid}}
  $\cT$  & $e$ & $f$   \\
$e$ & e & $f$    \\
$f$ & $f$ &  $e$      \\
\end{tblr}
&
\begin{tikzpicture}
  \tikzset{obj/.style={fill,circle,inner sep=1.5pt},
          arr/.style={-Latex}}
  \node at (0,0) [obj] (t0) {};
  \path[->,every loop/.append style=arr]
  (t0) edge [arr,loop right] node [above] {$e$} (t0)
  (t0) edge [arr,loop left] node [below] {$f$} (t0);
\end{tikzpicture}
\end{tblr}
\end{center}
\caption{A bijective homomorphism $\varphi:\cS\to\cT$ is given by $a\mapsto e$, since they are self-composable identity arrows,  and $b\mapsto f$ is forced by the bijectivity. However, finding such a homomorphism backwards is not possible: $e\mapsto a$, but then by bijectivity we would be forced to have $f\mapsto b$, thus the compatibility condition would be violated.}
\label{fig:asymmetricisomorphism}
\end{figure}
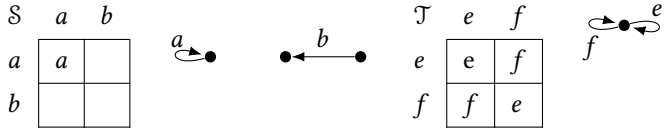

Alternatively, we can postulate that only the compatibility of composable pairs matter.
Arrows not composable in $\cS$ may be sent to arrows that are composable in $\cT$.
There are more of these more permissive morphisms, and we can recreate the (composable) dynamics of $\cS$ in more target semigroupoids $\cT$.
However, there are some strange consequences.
First, we can find bijective isomorphisms into a target structure, which is not even a semigroupoid.
It can fail associativity for the arrow pairs not composable in $\cS$.
Second, unlike isomorphisms, bijective homomorphism is not an equivalence relation by failing symmetry.
Figure~\ref{fig:asymmetricisomorphism} shows a pair of semigroupoids, one has a bijective homomorphism to the other one, but not vice versa.

\subsection{Implementation: Constructing Homomorphisms}

The call to the logic engine is a literal translation of the setup described in the mathematical part.
\begin{minted}[bgcolor=LightGray]{clj}
(defn morphism-search
  "Logic search for all homomorphisms of semigroupoid `S` to `T` given as
   composition tables. If `bijective?`, then only isomorphisms are enumerated.
   When `strict?`, non-composable arrow pairs cannot become composable."
  [S T bijective? strict?]
  (let [n (count S)
        phi (lvar-vector n)
        elts (range (count T))
        constraints (if strict?
                      (substitute (composition-relation S) phi)
                      (substitute (dissoc (composition-relation S) :n) phi))]
    (l/run*
     [q]
     (l/== q phi)
     (l/everyg (fn [elt] (l/membero elt elts)) phi)
     (if bijective?
       (l/distincto phi)
       l/succeed)
     (l/everyg (fn [[ab pairs]]
                 (l/everyg (fn [[a b]]
                             (composo T a b ab))
                           pairs))
               constraints))))
\end{minted}
There are 6 strict endomorphisms of the semigroupoid from Example \ref{ex:2obj-sgpoid}, and 9 endomorphisms in total. The extra three morphisms consist of sending everything to one identity arrow or the other, and one that sends everything to the identity in the first type, but all the inter-type arrows go to the non-identity arrow of the first type.

\subsection{Theory: Arrow-Type Semigroupoid}

The enumeration of semigroupoids drew attention to the consistency of the type structures: the connections between the types should not contradict the abstract composition rules.
Now, we pin down the minimal structure describing the connectivity of types, and it happens to be a semigroupoid.
\begin{definition}[Arrow-type semigroupoid]
Given a semigroupoid $\cS$, the \emph{arrow-type semigroupoid} $\cS_\rightarrow$ has the arrow-types (domain-codomain object pairs) of $\cS$ as its objects.
There is an arrow in $\cS_\rightarrow$ when the corresponding arrow-types are composable in $\cS$.
\end{definition}
There can be only one arrow between two objects in one direction, but there can be another one in the reversed direction; thus, these semigroupoids are not posets (they also potentially lack identities).

From the perspective of the decomposition algorithms that require a surjective homomorphism as a starting point (e.g., \cite{sgpoiddec}), the arrow-type semigroupoids can be a useful source of those first approximations.
We can always find a homomorphic image by `lumping together' arrows of the same type.
It also serves as an isomorphism invariant.

\begin{proposition}
For a semigroupoid $\cS$ with an associated type structure there exists a unique arrow-type semigroupoid $\cS_\rightarrow$.
\end{proposition}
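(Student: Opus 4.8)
The plan is to read $\cS_\rightarrow$ off directly from the fixed type structure of $\cS$, and then to argue two things separately: that the resulting data really is a semigroupoid (existence), and that every piece of that data is forced, leaving no freedom in the construction (uniqueness). First I would fix notation. Let $T=\{(\dom(a),\cod(a))\mid a\in\cS\}$ be the finite set of \emph{realized} arrow-types; this set is well defined as soon as the type structure is chosen, and its elements become the objects $\Ob(\cS_\rightarrow)$. I would then declare that there is an arrow from $X=(\sigma,\tau)$ to $W=(\sigma,\kappa)$ in $\cS_\rightarrow$ exactly when some realized arrow-type $Y=(\tau,\kappa)\in T$ witnesses the composition $XY=W$, collapsing all such witnesses $Y$ to a single arrow so that at most one arrow exists between two objects in a given direction, as the surrounding discussion requires.

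For existence it remains to check that this assignment satisfies Definition~\ref{def:sgpoid}. The one genuinely non-formal point is \emph{closure}: I must show that whenever $Y=(\tau,\kappa)$ and $Z=(\kappa,\lambda)$ are realized and composable, their composite type $(\tau,\lambda)$ is again realized, so that the induced composite arrow of $\cS_\rightarrow$ actually lands on an object of $\cS_\rightarrow$. This is exactly where I would invoke that $\cS$ is a semigroupoid: choosing arrows $b:(\tau,\kappa)$ and $c:(\kappa,\lambda)$, the composite $bc$ exists in $\cS$ and has type $(\tau,\lambda)$, so $(\tau,\lambda)\in T$. Associativity of $\cS_\rightarrow$ then reduces to associativity of type concatenation, $(\sigma,\tau)(\tau,\kappa)=(\sigma,\kappa)$, which is inherited from the associativity of $\cS$. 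Finally I would confirm that $\cS_\rightarrow$ carries a consistent type structure using the $\dom$/$\cod$ bookkeeping of Section~\ref{sect:abstractsgpoids}; unlike a generic table, consistency here is essentially automatic, because the arrows of $\cS_\rightarrow$ are defined \emph{by} their endpoints.

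For uniqueness I would argue that no step above involves a choice: the object set $T$ is determined by the type structure; the presence or absence of each arrow is determined by composability of arrow-types, which depends only on whether codomains and domains match together with membership in $T$; and each composite is determined by type concatenation. Hence any semigroupoid meeting the definition must have the same objects, arrows, and composition, so $\cS_\rightarrow$ is unique on the nose, not merely up to isomorphism, relative to the chosen type structure. I expect the main obstacle to be the well-definedness half rather than uniqueness: one must take care that the construction does not smuggle in the defect exhibited earlier, where a table can be associative yet fail to admit consistent typing (Figure~\ref{fig:typednonassoc}), so the existence argument should keep the closure/associativity check cleanly separate from the verification that the induced domains and codomains are mutually consistent. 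A secondary subtlety worth flagging is that uniqueness is claimed only \emph{given} the type structure, since a bare composition table may admit several (Figure~\ref{fig:emptycomptab}).
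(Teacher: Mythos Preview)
Your setup diverges from the paper's proof in a way that matters. Despite the definition text speaking of arrow-types as \emph{objects}, the paper's proof (and Figure~\ref{2obj6arr}'s right-hand table, and the description of arrow-type semigroupoids as ``transitively closed directed graphs with no parallel edges'') treats the arrow-types of $\cS$ as the \emph{arrows} of $\cS_\rightarrow$, with the objects simply being $\Ob(\cS)$. The whole proof is then three lines: send each $a$ to $(\dom(a),\cod(a))$; declare composition by $(x,y)(y,z)=(x,z)$; verify associativity by a symbolic two-sided calculation. Closure---that $(x,z)$ is again a realized arrow-type whenever $(x,y)$ and $(y,z)$ are---is the one substantive point, and it follows from composing witnesses in $\cS$ exactly as you say; the paper leaves it implicit. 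Uniqueness is immediate because every ingredient is read off from the type structure.

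You instead take the definition at face value, set $\Ob(\cS_\rightarrow)=T$, and invent an arrow relation $X\to W$ whenever some $Y\in T$ satisfies $XY=W$. This is neither the paper's intended structure nor a literal reading of ``when the corresponding arrow-types are composable'' (which would give $X\to Y$ iff $\cod(X)=\dom(Y)$, a relation that is not even transitively closed). More seriously, your argument then slips: the closure and associativity steps you write are about type concatenation $(\tau,\kappa)(\kappa,\lambda)=(\tau,\lambda)$, which is composition of arrow-types-as-arrows, not composition of the arrows you declared between arrow-type-objects. In the structure you actually set up, there is at most one arrow between any two objects, so associativity is vacuous and no appeal to type concatenation is needed; invoking it signals that you have drifted mid-proof into the paper's construction without noticing. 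The fix is to drop the detour: let the arrows of $\cS_\rightarrow$ be the elements of $T$ themselves, compose by concatenation, and you recover the paper's short argument together with the natural quotient map $\cS\to\cS_\rightarrow$ that motivates the whole construction.
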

\begin{proof}
  Send all arrows $a$ to the pair $(\dom(a),\cod(a))$.
  Composition is defined formally as $(x,y)(y,z)=(x,z)$.
Checking associativity is routine: $((x,y)(y,z))(z,w)=(x,z)(z,w)=(x,w)$ and $(x,y)((y,z)(z,w))=(x,y)(y,w)=(x,w)$.
\end{proof}

\begin{table}
  \begin{tblr}{colspec={Q[r]Q[r]Q[r]Q[r]Q[r]Q[r]Q[r]Q[r]Q[r]Q[r]Q[r]Q[r]Q[r]Q[r]Q[r]Q[r]},
  hline{3-28}={2-15}{0.2pt,solid},
  vline{2-16}={3-27}{0.2pt,solid},
  vline{7}={27}{0.2pt,solid}}
  & $\bullet$ \\
  $\rightarrow$  & $1$ & $2$ & $3$ & $4$ & $5$ & $6$ & $7$ & $8$ & $9$ & $10$ & 11 & 12 & 13 & 14  & $\sum$ \\
$1$ & $1$ & $1$ &  &  &  & & & & & & & & & &  2\\
$2$ &  & $3$ & $3$ & $1$ &  & & & & & & & & & &   7\\
$3$ &  & 1 & 8 & 8 & 3  & 1  & & & & & & & & &  21\\
$4$ &  & 1 & 8 & 23 & 23 &  11 & 3 & 1 & & & & & & &  70\\
$5$ &  &  & 6 & 34 & 67 & 64  & 32 & 11 & 3 & 1 & & & & &  218\\
6 & &  & 3 & 42 & 132 & 211 & 185 & 97 & 36 & 11 & 3 & 1  & & &  721\\
7 & & & 2 & 35 & 205 & 486 &  652 & 536 & 283 & 110 & 36 & 11  & 3 & 1 & 2360\\
8 & & &  & 27 & 254 & 925 &  1763 & 2063 & 1583 & 837 & 334 & 115 & \SetCell{gray!10} & \SetCell{gray!10} &  \\
9 & & & 1 & 14 & 260 &  1436 & 3905  & \SetCell{gray!10} & \SetCell{gray!10} & \SetCell{gray!10} & \SetCell{gray!10} & \SetCell{gray!10} & \SetCell{gray!10} & \SetCell{gray!10} \\
10 &   &  &  & 10 & 226 &  1908 & 7306  & \SetCell{gray!10} & \SetCell{gray!10} & \SetCell{gray!10} & \SetCell{gray!10} & \SetCell{gray!10} & \SetCell{gray!10} & \SetCell{gray!10} \\
 11 &  &  &  & 3 & 179 &  2182 & 11768  & \SetCell{gray!10} & \SetCell{gray!10} & \SetCell{gray!10} & \SetCell{gray!10} & \SetCell{gray!10} & \SetCell{gray!10} & \SetCell{gray!10} \\
 12   &  &  &  & 3 & 120 &  2246 & 16702  & \SetCell{gray!10} & \SetCell{gray!10} & \SetCell{gray!10} & \SetCell{gray!10} & \SetCell{gray!10} & \SetCell{gray!10} & \SetCell{gray!10} \\
 13   &  &  &  & 2 & 79 &  2050 & 21204  & \SetCell{gray!10} & \SetCell{gray!10} & \SetCell{gray!10} & \SetCell{gray!10} & \SetCell{gray!10} & \SetCell{gray!10} & \SetCell{gray!10} \\
 14   &  &  &  &  & 46 &  1725 &24359  & \SetCell{gray!10} & \SetCell{gray!10} & \SetCell{gray!10} & \SetCell{gray!10} & \SetCell{gray!10} & \SetCell{gray!10} & \SetCell{gray!10} \\
 15   &  &  &  &  & 25 & 1317 &25604  & \SetCell{gray!10} & \SetCell{gray!10} & \SetCell{gray!10} & \SetCell{gray!10} & \SetCell{gray!10} & \SetCell{gray!10} & \SetCell{gray!10} \\
 16   &  &  &  & 1 & 12 &  951 & 24773  & \SetCell{gray!10} & \SetCell{gray!10} & \SetCell{gray!10} & \SetCell{gray!10} & \SetCell{gray!10} & \SetCell{gray!10} & \SetCell{gray!10} \\
 17   &  &  &  &  & 12 & 630 & 22279  & \SetCell{gray!10} & \SetCell{gray!10} & \SetCell{gray!10} & \SetCell{gray!10} & \SetCell{gray!10} & \SetCell{gray!10} & \SetCell{gray!10} \\
 18   &  &  &  &  & 3 &  416 & 18727  & \SetCell{gray!10} & \SetCell{gray!10} & \SetCell{gray!10} & \SetCell{gray!10} & \SetCell{gray!10} & \SetCell{gray!10} & \SetCell{gray!10} \\
 19   &  &  &  &  & 2 &  245 & 14857  & \SetCell{gray!10} & \SetCell{gray!10} & \SetCell{gray!10} & \SetCell{gray!10} & \SetCell{gray!10} & \SetCell{gray!10} & \SetCell{gray!10} \\
 20   &  &  &  &  & 2 & 143 & 11112  & \SetCell{gray!10} & \SetCell{gray!10} & \SetCell{gray!10} & \SetCell{gray!10} & \SetCell{gray!10} & \SetCell{gray!10} & \SetCell{gray!10} \\
 21   &  &  &  &  & 2 &  81 & 7929  & \SetCell{gray!10} & \SetCell{gray!10} & \SetCell{gray!10} & \SetCell{gray!10} & \SetCell{gray!10} & \SetCell{gray!10} & \SetCell{gray!10} \\
 22   &  &  &  &  & &  56 & 5394  & \SetCell{gray!10} & \SetCell{gray!10} & \SetCell{gray!10} & \SetCell{gray!10} & \SetCell{gray!10} & \SetCell{gray!10} & \SetCell{gray!10} \\
 23   &  &  &  &  & &  28 & 3562  & \SetCell{gray!10} & \SetCell{gray!10} & \SetCell{gray!10} & \SetCell{gray!10} & \SetCell{gray!10} & \SetCell{gray!10} & \SetCell{gray!10} \\
 24   &  &  &  &  & &  13 & 2251  & \SetCell{gray!10} & \SetCell{gray!10} & \SetCell{gray!10} & \SetCell{gray!10} & \SetCell{gray!10} & \SetCell{gray!10} & \SetCell{gray!10} \\
 25   &  &  &  &  & 1 &  11 & 1389  & \SetCell{gray!10} & \SetCell{gray!10} & \SetCell{gray!10} & \SetCell{gray!10} & \SetCell{gray!10} & \SetCell{gray!10} & \SetCell{gray!10} \\
$\sum$ & 1 & 6 & 31 & 203 &1653 & 17156  & 227844\\
\end{tblr}
\caption{The number of arrow-type semigroupoids with a given number of arrows ($\rightarrow$) and objects $(\bullet)$. In other words, these are the number of transitively closed directed graphs with no parallel edges up to isomorphism. Empty cells indicate the impossibility of the graph (number zero), gray cells the currently unknown values (the enumeration is an ongoing computation). The column sums give the number of arrow-type semigroupoids with a given number of objects, similarly, the row sums are for a given number of arrows. The total values for the columns, starting from 6 objects, do contain entries that are not displayed in the table.}
\label{tab-arrow-type-enum}
\end{table}

Enumerating finite categories \cite{2014CountingFiniteCats} depends on the enumeration of finite monoids \cite{smallsemi}, since a category is about connecting the monoids local to objects with cross-object morphisms.
Knowing the large number of monoids even for small number of elements, finding all finite categories is not a particularly rewarding task.
It is limited to small examples.
However, enumerating all arrow-type semigroupoids is a more interesting endeavor: it gives the overall possible connection structures, the `blueprints' of semigroupoids.
Table \ref{tab-arrow-type-enum} contains the known numbers of arrow-type semigroupoids with a given number of arrows and objects.\footnote{Incidentally, the table also has a new question for the ultimate answer forty-two. \emph{How many different transitively closed directed graphs without parallel arrows are there with 4 vertices and 6  arrows?}}

Arrow-type semigroupoids are transitively closed directed graphs with no parallel edges, i.e., only one arrow is allowed for a domain-codomain pair.
They are defined by two parameters: $(\#\text{arrows}, \#\text{objects})$.
This gives some immediate limitations on what is possible.
\begin{itemize}
  \item There cannot be more than $2n$ objects for $n$ arrows without having an isolated object. Allowing isolated objects would imply infinite families for any finite arrow-type semigroupoid.
\item The number of objects limits the number of arrows. For instance, we cannot have more than 4 arrows for 2 objects without introducing parallel arrows. Thus the $(5,2)$ cell is empty in the table.
\item The $(n,2n)$ entries are all 1. This is the case of all distinct domains and codomains, $n$ isolated arrows.
\item The $(n,2n-1)$ entries are all 3. The $(3,3)$ case has 3 possibilities, and then we simply add a separate arrow with two new objects.
\item The $(n^2,n)$ entries are all 1. These are the \emph{complete} arrow-type semigroupoids.
\end{itemize}

\subsection{Implementation: Enumerating Arrow-Type Semigroupoids}

We used three different methods for computing the number of distinct arrow-type semigroupoids with $n$ arrows and $m$ objects:
\begin{description}
\item[brute force] for verification puposes, generating all possibilities for $(n,m)$ combinatorially up to $m^{2n}$ lists and filter them by the required properties of a semigroupoid, 
\item[logic search] for \emph{adding one arrow  while keeping the transitive closure} with or without adding a new object, i.e., computing all $(n,m)$ semigroupoids from $(n-1,m)$ and $(n-1,m-1)$,
\item[transitive closure] systematically \emph{adding one arrow, and compute the generated semigroupoid}, similar to the subsemigroup enumeration method \cite{T4enum,subsemi}.
\end{description}
The brute force method is limited to small cases, but it is an excellent way to bootstrap enumeration by providing ground truth (the algorithm is easier to verify) to check against when developing more specialized algorithms.

Here is the relational code for adding one more arrow. The search space size is small: $m^2$.
The constraints are checking that the arrow has a valid domain and codomain, it is not equal to any existing arrow, and any post- and precomposition gives an arrow already in the set.
Composition is checked for the new arrow as well.
When adding an object too, we have the additional constraint that the new arrow has the new object for its domain, or codomain, or both.
\begin{minted}[bgcolor=LightGray]{clj}
(defn one-more-arrow
  "Finds all the arrows that can be added without violating composition to
   the `arrows` when we have `m` objects available. When `added-object`,
   we need to connect the arrow to the last object."
  [arrows m added-object?]
  (let [[d c] (lvar-vector 2)
        extended (conj arrows [d c])]
    (l/run*
     [q]
     (l/== q [d c])
     (l/membero d (range m))
     (l/membero c (range m))
     (if added-object?
       (l/conda [(l/== (dec m) d)]
                [(l/== (dec m) c)])
       l/succeed)
     (l/distincto extended)
     (l/everyg (fn [[dom cod]]
                 (l/conda ;;postcompose
                  [(l/distincto [cod d])]
                  [(l/membero [dom c] extended)]))
               extended)
     (l/everyg (fn [[dom cod]]
                 (l/conda ;;precompose
                  [(l/distincto [c dom])]
                  [(l/membero [d cod] extended)]))
               extended))))
\end{minted}
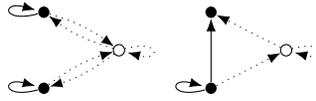
\begin{figure}
  \begin{tikzpicture}
  \tikzset{obj/.style={fill,circle,inner sep=1.5pt},
          arr/.style={-Latex}}
  \node at (0,0) [obj] (t0) {};
  \node at (0,1) [obj] (t1) {};
  \node at (1,0.5) [obj,draw=black, fill=white] (t2) {};

  \path[->,every loop/.append style=arr]
  (t0) edge [arr,loop left]  (t0)
  (t1) edge [arr,loop left] (t1)
  (t2) edge [arr,dotted, bend left=10] (t0)
  (t0) edge [arr,dotted, bend left=10] (t2)
    (t2) edge [arr,dotted, bend left=10] (t1)
  (t1) edge [arr,dotted, bend left=10] (t2)
  (t2) edge [arr,loop right,dotted] (t2);
\end{tikzpicture}
  \begin{tikzpicture}
  \tikzset{obj/.style={fill,circle,inner sep=1.5pt},
          arr/.style={-Latex}}
  \node at (0,0) [obj] (t0) {};
  \node at (0,1) [obj] (t1) {};
  \node at (1,0.5) [obj, draw=black, fill=white] (t2) {};

  \path[->,every loop/.append style=arr]
  (t0) edge [arr,loop left]  (t0)
  (t0) edge [arr] (t1)
  (t0) edge [arr,dotted] (t2)
  (t2) edge [arr,dotted] (t1)
  (t2) edge [arr,dotted, loop right] (t2);
\end{tikzpicture}
\caption{All the ways of adding one arrow  (dotted) and one object (empty circle) to two different $(2,2)$ arrow-type semigroupoids.}
\label{fig:addingone}
\end{figure}
Figure \ref{fig:addingone} visualizes the results of the following two queries.
\begin{minted}[bgcolor=LightGray]{clj}
(one-more-arrow [[0 0] [1 1]] 3 true)
([0 2] [2 0] [1 2] [2 1] [2 2])
(one-more-arrow [[0 1] [1 1]] 3 true)
([0 2] [2 1] [2 2])
\end{minted}
However, this additive method cannot produce all the semigroupoids.
An obvious example is the $(n,2n)$ entries, and the complete semigroupoids that appear after some gap in the table.
One cannot simply add just one arrow when closing cycles, since several appear at the same time.
The most efficient enumeration method is adding one arrow and taking the transitive closure (thus jumping over gaps).
It is similar to the subsemigroup enumeration method that was used for enumerating all degree 4 transformation semigroups \cite{T4enum}.

All methods have the same bottleneck, determining uniqueness.
This is the well-known difficult problem of  graph canonization \cite{MCKAY201494}.
We can build a canonical form (minimal in lexicographic order) by applying all permutations of the objects, up to degree $2n$, the symmetric group $S_{2n}$.
For the brute force approach, this allows a small optimization by starting the underlying lists with zero.
We also need to avoid working with the symmetric groups, since they have $n!$ elements.
Thus, we implemented a logic search for directed graph isomorphism.

Deciding directed graph isomorphism is similar to deciding semigroupoid isomorphism.
The adjacency matrix is sort of like a multiplication table, except that the values are boolean.
Therefore, the search is the same as the semigroupoid homomorphism code, but with a single relation only.
We also added the grouping of possible image arrows by their in- and out-degrees.
This saves checking a lot of doomed to fail assignments.
\begin{minted}[bgcolor=LightGray]{clj}
(defn digraph-isomorphisms
  "Logic search for all isomorphisms of directed graph `G` to `H` given as
   a collection of arrows (ordered pairs of elements of a finite set).
  Limitation: nodes with no arrows are ignored."
  [G H]
  (let
      [Gnodes (nodes G)
       Hdegree-lookup (group-by (out-in-degrees H) (nodes H))
       targets (zipmap Gnodes ;G vertex -> matching H vertices
                       (map (comp Hdegree-lookup
                                  (out-in-degrees G))
                            Gnodes))
       phi (zipmap Gnodes (lvar-vector (count Gnodes))) ;the morphism
       constraints (mapv (fn [[a b]] [(phi a) (phi b)]) G)] ;substitution
    (l/run*
      [q]
      (l/== q phi)
      (l/everyg (fn [v] (l/membero (phi v) (targets v))) Gnodes)
      (l/distincto (vals phi)) ;different vertices go to different vertices
      (l/everyg (fn [arrow]
                  (l/membero arrow (vec H))) ;phi(arrow) should be an arrow in H
                constraints))))
\end{minted}
Using a hash-map to represent the mapping is important.
If the graphs nodes are integers, the search can be easily trapped if we allow directed graphs isolated vertices.
\begin{minted}[bgcolor=LightGray]{clj}
(first (digraph-isomorphisms [[0 0]] [[9 9]]))
\end{minted}
The unification algorithm dutifully matches the isolated vertices in all possible ways, creating a symmetric group.
Therefore, we need to enforce a compact representation, with no isolated vertices.

Despite these optimizations, establishing isomorphism is a heavy computation. It is better to avoid it as much as possible by computing simple invariant properties: e.g., ordered sequences of out- and in-degrees of the graphs, or ordered frequencies of values in the powers of the adjacency graphs (the number of paths of certain lengths).
We implemented a nested hash-map database for isomorphism class representatives, with keys number of nodes, number of arrows, and a graph isomorphism invariant signature.
This allows the enumeration results in Table \ref{tab-arrow-type-enum}, using a computer with 64 gigabyte of RAM memory.

\subsubsection*{Verification} To cross-check the database algorithms we recomputed the number of distinct \emph{functional digraphs} (transformations) for a give degree (number of nodes).
The values of  this integer sequence \cite[\href{https://oeis.org/A001372}{A001372}]{oeis} have been computed in several ways.
There is an efficient algorithm \cite{2024minrependofunc} for computing the class representatives, but for the validation we used the simplest brute force method of listing all degree $m$ transformations and inserting them into the database.
It is a quick computation up to degree 5.

\section{Transformation Representations of Semigroupoids}
\label{sect:transrep}

The underlying topic of this paper is moving between the abstract and a concrete representation.
Here, we want to find a finite transformation for each arrow in a semigroupoid.
Instead of directly constructing the transformations, we will use the algorithms in Section \ref{section:morphisms} and find embeddings into a `full' structure.

For transformation representations, it is natural to ask for the minimal number of states for all types.
First, we need to be able to build the complete semigroupoids from generators.
Then, we characterize the full structures.
Finally, we construct the embeddings into those.

\subsection{Theory: Semigroupoid from Generators}

By a concrete representation we mean attaching some combinatorial objects to the arrows.
Thus, an \emph{arrow} is a triple $(\sigma,\tau,m)$ with source $\sigma$, target $\tau$, and some composable structure $m$.
Two arrows $a=(\sigma,\tau,m)$ and $b=(\sigma',\tau',m')$ are composable if $\tau=\sigma'$.
In that case, the result is $(\sigma,\tau',mm')$.

In a concrete representation, a semigroupoid need not be given in full.
A suitable subset can \emph{generate} the whole semigroupoid by using composition, as the attached combinatorial structure can determine the equality of arrows.
In the semigroup case, we simply keep composing on the right by the generators.
For semigroupoids, we need to compose on both sides, i.e., pre- and postcomposing, but only when types match - using a lookup table for the types of generators.
Objects with no endoarrows and incoming arrows are represented by sets of Garden of Eden states. These have no predecessors, and we can only move away from them.

In a \emph{transformation representation} we assign finite sets to the types, and total functions of these sets to the arrows.
The composition of arrows thus corresponds to function composition.

\begin{example}[Communicating vessels -- Transferring dynamics]
\label{example:communicating_vessels}
  Objects connected by isomorphisms, and thus sets of states with bijective maps between have the same semigroup.
\begin{figure}[h]
  \begin{tikzpicture}[shorten >=1pt, node distance=1.5cm, on grid, auto,inner sep=2pt]
    \tikzset{arr/.style={-Latex},
             rarr/.style={Latex-Latex},
             bg/.style={fill=gray!32,draw=none,rounded corners,inner sep=10pt},
             n/.style={draw,circle,inner sep=1pt,fill=white}}
    \node[n] (X10)   {$0_1$};
    \node[n] (X11) [below of=X10] {$1_1$};
    \node at (3,0) [n] (X20)  {$0_2$};
    \node[n] (X21) [below of=X20] {$1_2$};
    \node[] (X1label) [left = 1cm of X10] {$X_1$};
    \node[] (X2label) [right = 1cm of X20] {$X_2$};
    \path[->,every loop/.append style=arr,every edge/.append style=arr]
      (X10) edge [rarr] node[left] {$c$} (X11)
      (X20) edge node {$r$} (X21)
      (X21) edge [loop below] node (lab) {$r$} (X21)
      (X10.north east) edge [dashed, bend left=10] node {$f_{1\rightarrow 2}$} (X20.north west)
      (X11.north east) edge [dashed, bend left=10] node {$f_{1\rightarrow 2}$} (X21.north west)
      (X20) edge[dotted,thick,bend left=10] node {$g_{2\rightarrow 1}$} (X10)
      (X21) edge[dotted,thick,bend left =10] node {$g_{2\rightarrow 1}$}(X11);
      \begin{pgfonlayer}{background}
        \node[fit=(X20) (X21) (lab), bg] {};
      \end{pgfonlayer}
      \begin{pgfonlayer}{background}
        \node[fit=(X10) (X11), bg] {};
      \end{pgfonlayer}
  \end{tikzpicture}
\caption{Type $X_1$ has a transposition, and type $X_2$ has reset.
The connecting $f,g$ transformations transfer these, so both objects end up with the same permutation-reset automaton.}
\end{figure}
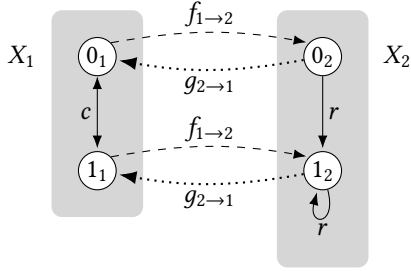
With $c=\left( \begin{smallmatrix}
  0_1 & 1_1 \\ 1_1 & 0_1
\end{smallmatrix}\right)$, $r=\left( \begin{smallmatrix}
  0_2 & 1_2 \\ 1_2 & 1_2
\end{smallmatrix}\right)$,
$f_{1\rightarrow 2}=\left( \begin{smallmatrix}
  0_1 & 1_1 \\ 0_2 & 1_2
\end{smallmatrix}\right)$,
$g_{2\rightarrow 1}=\left( \begin{smallmatrix}
  0_2 & 1_2 \\ 0_1 & 1_1
\end{smallmatrix}\right)$ we can see how the action is transferred.
The cycle goes from $X_1$ to $X_2$ by $g_{2\rightarrow 1}cf_{1\rightarrow 2}=\left( \begin{smallmatrix}
  0_2 & 1_2 \\ 1_2 & 0_2
\end{smallmatrix}\right)$.
The reset goes from $X_2$ to $X_1$ by $f_{1\rightarrow 2}rg_{2\rightarrow 1}=\left( \begin{smallmatrix}
  0_1 & 1_1 \\ 1_1 & 1_1
\end{smallmatrix}\right)$.

Similar transfers explain how bigger groups can be assembled.
For example, if a type realized by a three-element set with a 3-cycle only can generate $S_3$ if it has a bijective map to a two-element set with a transposition.
\end{example}

\subsection{Implementation: \texttt{sgpoid-by-gens}}
We create two lookup tables for the generators.
One maps a type to generators with that type as domain.
This is used for pre-composition.
The other one maps types to generators with that type as a codomain, used for post-composition.
Then, we systematically pre- and postcompose by the generators (starting with the generators themselves) until no new arrow is generated.
Interesting to note that this algorithm has the transitive closure at its core.

\subsection{Theory: From Abstract to Transformation Representations}

Given an abstract semigroupoid (represented as a composition table),
there are at least two ways to construct transformation representations:
\begin{description}
  \item[direct:] For each arrow in the semigroupoid, we find the corresponding transformation. The number of variables is the size of the domain, and each variable can have a number of values that equal to the size of the codomain.
  \item[indirect:] We have a single variable for each arrow and the possible values come from a full transformation semigroupoid. We can use the isomorphism search.
\end{description}
Due to the lower number of logic variables, we choose the second method.
We construct transformation representation by \emph{embedding abstract semigroupoids into full transformation semigroupoids}.
Embedding means finding an isomorphic copy of the source inside the target structure.

The problem is that for transformation semigroupoids, the full structure is not uniquely defined.
For semigroups, we have one \emph{full transformation semigroup} $T_n$ for each \emph{degree} $n$.
It consists of all $n^n$ total transformations of $n$ points.
For semigroupoids `full' in this sense, we can have copies of $T_n$ in each object.
We need to specify the degrees for all of $\Ob(\cS)$.
The local degrees also determine how many arrows can be between to objects.
Then, we need the arrow-type semigroupoid $\cS_\rightarrow$:
the vertices are the objects, and there is an edge between $\tau_i$ and $\tau_j$ iff $\cS$ has an arrow $a:\tau_i\to\tau_j$.
A \emph{full transformation semigroupoid} can be defined relative to a list of degrees $\mathbf{d}$ and a transitively closed directed graph $G$, denoted by $\cT_{\mathbf{d},G}$.
Note that loops $\tau_i\to \tau_i$ might not be present in $G$; thus, a type can have an empty transformation semigroup.

Since the full structure is not unique, we do not have a straightforward way to find the smallest transformation semigroupoids.
For transformation semigroups, we can find the minimal degree representation of $S$ by trying to embed it into $T_1, T_2, T_3$ and so on.
For transformation semigroupoids we need to enumerate transitively closed directed graphs, then find minimal degrees for all objects.
The enumeration of the possible targets is multi-dimensional.
Note that a Cayley Theorem (special case of the Yoneda Lemma) for semigroupoids ensures that there is always a transformation representation, but we aim for the smallest possible ones.

\subsection{Implementation}

The graph is defined by collecting the domain-codomain pairs, conveniently represented by integers $\{0,\ldots,m\}$.
We know minimal generating sets for $T_n$, a cycle, a transposition and an elementary collapsing.
For the type switching maps, we can simply have an arbitrary bijection and let the local $T_n$s scramble them.
However, since we need complete composition tables anyway, we can simply combinatorially compute all possible maps along an arrow.

\begin{example}
  We construct transformation representations for the semigroupoid in Example \ref{ex:2obj-sgpoid} (Figure \ref{2obj6arr}).
  First we construct a few candidate full transformation semigroupoids (see Figure \ref{fig:target_graphs}) for the embeddings.
  Using degree 2 for the objects, the embedding into the fully connected and the one-way connected two objects give the same representation.
  The fully connected giving more solutions due to the extra symmetry of swapping objects.
  The isolated objects cannot emulate the original semigroupoid.
  However, there is a (non-strict) embedding into $T_3$.
  The image of the 6-arrow semigroupoid is a 7-element transformation semigroup, as the non-composable arrows map to a sink state.
\end{example}

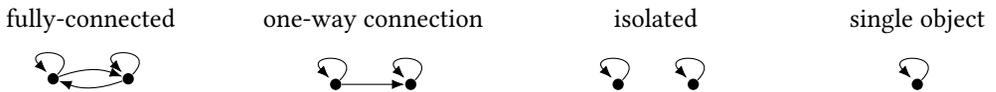
\begin{figure}[h]
  \centering
\begin{tblr}{Q[c,m,.24\linewidth]Q[c,m,.24\linewidth]Q[c,m,.24\linewidth]Q[c,m,.2\linewidth]}
fully-connected&  one-way connection& isolated & single object\\
  \begin{tikzpicture}
  \tikzset{obj/.style={fill,circle,inner sep=1.5pt},
          arr/.style={-Latex}}
  \node at (0,0) [obj] (t0) {};
  \node at (1,0) [obj] (t1) {};

  \path[->,every loop/.append style=arr]
  (t0) edge [arr,loop]  (t0)
  (t1) edge [arr,loop] (t1)
  (t0) edge [arr, bend left=20] (t1)
  (t1) edge [arr, bend left=20] (t0);
\end{tikzpicture}&
  \begin{tikzpicture}
  \tikzset{obj/.style={fill,circle,inner sep=1.5pt},
          arr/.style={-Latex}}
  \node at (0,0) [obj] (t0) {};
  \node at (1,0) [obj] (t1) {};

  \path[->,every loop/.append style=arr]
  (t0) edge [arr,loop]  (t0)
  (t1) edge [arr,loop] (t1)
  (t0) edge [arr] (t1);
\end{tikzpicture}&
\begin{tikzpicture}
  \tikzset{obj/.style={fill,circle,inner sep=1.5pt},
          arr/.style={-Latex}}
  \node at (0,0) [obj] (t0) {};
  \node at (1,0) [obj] (t1) {};

  \path[->,every loop/.append style=arr]
  (t0) edge [arr,loop]  (t0)
  (t1) edge [arr,loop] (t1);
\end{tikzpicture}&
\begin{tikzpicture}
  \tikzset{obj/.style={fill,circle,inner sep=1.5pt},
          arr/.style={-Latex}}
  \node at (0,0) [obj] (t0) {};
  \path[->,every loop/.append style=arr]
  (t0) edge [arr,loop] (t0);
\end{tikzpicture}
  \end{tblr}
  \caption{Possible target graphs (arrow-type semigroupoids) for full transformation semigroupoids. When choosing degrees $\mathbf{d}=(2,2)$ for the first three graphs, every arrow in these diagrams stand for the 4 possible transformations on 2 points. For the single object representation, we need degree at least 3, to have 27 possible candidate transformations, since 4 would not be enough.}
  \label{fig:target_graphs}
\end{figure}

\section{Conclusion and Further Work}
\label{sect:conclusions}

We explored associativity and type structure consistency for abstract semigroupoids and created transformation representations by embedding into certain full transformation semigroupoids.
We succeeded improving our understanding of semigroupoids as models of computation in several ways.
\begin{itemize}
  \item There can be different type structures for the same abstract semigroupoid.
  \item We can define homomorphisms in a strict, typed sense, or a more permissive embedding style, but leading to some weaker properties (bijective homomorphisms not being isomorphisms, embedding into composition tables failing associativity).
  \item The arrow-type semigroupoid is always a valid homomorphic image and it serves as a blueprint for all semigroupoids, thus we started the enumeration of the finite ones.
\end{itemize}
These results are, of course, only the beginning of investigating these structures.
The most immediate todo items for further exploration are the following.
\begin{itemize}
  \item Unify the associativity and type consistency constraints in a single enumeration.
  \item Construct relation morphisms.
\end{itemize}
We also have a wishlist for the logic engine.
\begin{itemize}
  \item The goal \texttt{geto} for hash-maps would be useful.
In functional mode in \texttt{Clojure} we frequently use hash-maps.
When going relational, we need to convert the algorithms back to list-based operations.
\item How to do existential constraints efficiently? E.g., $\exists a,b,c,d,e,f$ such that $abc\neq def$ for excluding 3-nilpotency.
\end{itemize}

\section*{Software Tools}

All the algorithms described in this paper are implemented in the \texttt{kigen} open-source software package \cite{kigen}.
The subfolder \texttt{2025\_Computational\_Exploration\_of\_Finite\_Semigroupoids} containing the computations for this paper can be found in the \texttt{experiments} folder of the package.

\begin{acks}
  We thank Jason Hemann and William E.~Byrd for their inspiration and for writing this paper.
  We also thank two anonymous reviewers for their constructive criticism.
This project was funded in part by the Kakenhi grant
22K00015 by the Japan Society for the Promotion of Science (JSPS), titled `On progressing human understanding in the shadow of superhuman
deep learning artificial intelligence entities' (Grant-in-Aid for Scientific
Research type C, \url{https://kaken.nii.ac.jp/grant/KAKENHI-PROJECT-22K00015/}). This research was also supported in part by the Natural Sciences and Engineering Research Council of Canada (NSERC), funding reference number RGPIN-2019-04669.
Cette recherche a \'et\'e financ\'ee en partie par le Conseil de recherches en sciences naturelles et en g\'enie du Canada (CRSNG), num\'ero de r\'ef\'erence RGPIN-2019-04669.
\end{acks}

\bibliographystyle{ACM-Reference-Format}
\bibliography{../coords.bib}

\end{document}